\def\lrp#1{\left( #1\right)}
\def\Xn#1{#1_{1},\ldots,#1_{n}}
\def\Xnp#1{(\Xn#1)}
\def\Tref#1{Theorem~\ref{#1}}
\begin{document}

\theoremstyle{definition} 
\newtheorem{definition}{Definition}

\theoremstyle{plain}
\newtheorem{proposition}{Proposition}
\newtheorem{theorem}{Theorem}
\newtheorem{lemma}{Lemma}\title{Quantization Domains}

\author[Ratnarajan Hoole]{M.R.R. Hoole}
\address{Department of Mathematics and Statistics, University of Jaffna, Sri Lanka}

\email{kirupahoole@gmail.com}

\author[Arthur Jaffe]{Arthur Jaffe}

\address{Harvard University\\
Cambridge, MA 02138}

\email{arthur\_jaffe@harvard.edu}

\author[Christian J\"akel]{Christian D.\ J\"akel}

\address{School of Mathematics\\ 
Cardiff University, Wales}
 
\email{christian.jaekel@mac.com}

\begin{abstract}
We study the quantization of certain classical field theories using reflection positivity.  We give elementary conditions that ensure the resulting vacuum state is cyclic for products of quantum field operators, localized in a bounded Euclidean space-time region ${\mathcal O}$ at positive time.  We call such a domain a {\em quantization domain} for the classical field. The fact that bounded regions are quantization domains in classical field theory is similar to the ``Reeh-Schlieder''  property in axiomatic quantum field theory.   
\end{abstract}

\maketitle

\section{Quantization Domains}
The Reeh-Schlieder property of quantum field theory states that one can recover all  the properties described by the field, or  by bounded functions of the field called observables,  from information localized in any open (bounded) space-time domain ${\mathcal O}$.   The Reeh-Schlieder theorem states that any  Wightman field theory, or any Haag-Kastler local quantum theory, has this property~\cite{Reeh-Schlieder, Streater-Wightman, Haag}.  This result is a consequence of analyticity properties that follow in turn from the positivity of the energy and Lorentz covariance, as well as locality.  

Here we consider the analog of this result in the context of a classical field on a Euclidean space-time $X$.  We assume that the classical field can be quantized by using the Osterwalder-Schrader  property of  {reflection positivity}.   We say that a domain ${\mathcal O}$ for the classical field is a  {\em quantization domain}, if the quantization of fields supported  in ${\mathcal O}$ gives a dense set of vectors in the physical Hilbert space. 

Of course, the are well-known equivalence theorems for Wightman theory and Osterwalder-Schrader (OS) theory ensure that any bounded, positive-time domains in an OS theory is a quantization domain.  However, we are interested in  investigating  field theories for which all the standard axioms may not apply or cases in which they have not been verified.  Thus we pose weaker assumptions about the classical theory than the full OS axioms.  For example,  we replace the relativistic spectrum condition with a weaker assumption.

\subsection{Quantization}
Consider $d=s+1$ dimensional space-times 
\[
	X=\mathbb{R} \times \Sigma \; , 
\]
where $\Sigma=X_{1}\times\cdots\times X_{s}$ and $X_{j}=\mathbb{R}$ or $X_{j}=S^{1}$. Call the 
variable $t \equiv x_{0} \in \mathbb{R}$  the time coordinate.  

Now consider the Fock space $\mathcal{E}(X)$ over the one-particle space~$L_{2}(X)$. 
 Denote the Fock vacuum vector by $\Omega_{0}^{\tt E}\in \mathcal{E}(X)$ and assume that 
the classical scalar field 
\[
	\Phi(f)=\int \Phi(x)f(x)dx \; , \qquad f\in C^{\infty}_{0}(X) \; , 
\]
defined in terms of the usual creation and annihilation operators, gives rise to the characteristic functional
\[
		S(f)
		 =\langle \Omega_{0}^{\tt E}, e^{i\Phi(f)} \Omega_{0}^{\tt E}\rangle
		= \sum_{n=0}^{\infty} \frac{i^{n}} {n!}\, \langle\Omega_{0}^{\tt E}, \Phi(f)^{n} \Omega_{0}^{\tt E} \rangle \;,
\]
which is convergent for $f\in C^{\infty}_{0} (X)$.   
Assume that  the abelian space-time translation group $T(a)$ and the time reflection $\vartheta$  act covariantly on the field 
and leave the characteristic functional $S(f)$ invariant. 
   
Divide $X$ into a union of three disjoint parts 
\[
	X=X_{-}\cup X_{0} \cup X_{+} \; , 
\]
with  the reflection $\vartheta$ leaving~$X_{0}$ invariant and interchanging $X_{\pm}$.   Let $\mathcal{E}_{\pm,0}\subset \mathcal{E}$
denote the subspace of finite linear combinations 
\[
	A=\sum_{j=1}^N c_j \,e^{i\Phi(f_j)}\,\Omega_0^{\tt E} \; ,  \qquad f_j\in C^{\infty}_{0} (X_{\pm})\; , 
\]
and let $\mathcal{E}_\pm$ denote its closure in $\mathcal{E}$. Now define equivalence classes 
\begin{equation}
		\label{equiv-class}
		\widehat A=\{A+N\} \; , 
\end{equation}
with $N$ in the null space of the reflection-positive form  
\[
		\langle \widehat A, \widehat B \rangle_{\mathcal {H}}
		=\langle  A , \Theta B \rangle_{ \mathcal{E} } \qquad \text{on} \quad \mathcal{E}_{+}\times \mathcal{E}_{+} \; . 
\]
The range of the map \eqref{equiv-class} defines the pre-Hilbert space $\mathcal{H}$, whose closure gives the quantization map 
$A\mapsto \widehat A\in\mathcal{H}$.

\subsection{\label{Paragraph:SH}Standard Hypotheses {\bf C1--C3} on the Classical Fields}
Denote the  Hamiltonian by $H$, the momentum by $\vec P$, and the time-zero field (averaged 
with a real test function $h$ depending on the spatial variable) by $\varphi(0, h)=\widehat \Phi(0, h)$.  
Let $\| h \|_{\alpha}$ denote some Schwarz-space norm of the function $h$ depending on the spatial 
variables, and set 
\begin{equation}
		\| f \|_{\alpha,1} 
		= \int \| f(t,\ \cdot\ ) \|_{\alpha}\,dt \;.
	\label{Test Function Norm for Operator}
\end{equation}
Assume that 
	\begin{enumerate}
	\item[\bf C1] $S(f)$ is space-time translation and time-reflection invariant, 
\begin{equation*}
	S(f_{x})=S(\vartheta f)=S(f) \; .
\end{equation*}
	\item[\bf C2] $S(f)$ is reflection-positive on $X_{+}$.
	\item[\bf C3] There is a constant $M$, an integer $n$,  and a norm \eqref{Test Function Norm for Operator} such that 
\begin{equation*}
			| S(f) | 
			\le e^{M \| f \|_{\alpha,1}^{n}} \; . 
\end{equation*}
	\end{enumerate}
We also assume that  time translation in the positive-time direction maps $C^{\infty}_{0} (X_{\pm})$ into itself. Then the unitary time translation quantizes to a self-adjoint contraction semigroup $e^{-tH}$ generated by the Hamiltonian $H$.    Furthermore, we assume spatial translation in the $j^{\rm th}$ coordinate direction is also unitary and leaves $\Omega_{0}^{\tt E}$ invariant, so it quantizes to a unitary group $U(x_{j})=e^{-ix_{j}P_{j}}$ on $\mathcal{H}$.  Let $\vec P$ denote the vector with components $P_{j}$.  Furthermore, the classical field $\Phi$ quantizes to an imaginary time field $\varphi$. 

\subsection{\label{Paragraph:SH2}Standard Hypotheses on Quantum Mechanics of Fields}
We assume a weak form of the spectral condition:  there is a constant $M<\infty$ such that 
\begin{equation}
		0\le H\;,
		\quad
		\text{and}
		\quad
		\pm | \vec P |  \le M(H+I)\;.
	\label{Fundamental Estimates-1}
\end{equation}
Furthermore, we assume that the field at time zero can be bounded by the energy.  In particular, there is a Schwartz-space norm 
$\| h \|_{\alpha}$ on the space of time-zero  test functions  such that the  field operators satisfy  the form estimates
\begin{equation}
		\pm \varphi(0,h) 
		\le \| h \|_{\alpha} \,(H+I)\;.
	\label{Fundamental Estimates-2}
\end{equation}

\paragraph{\em Field Operators}

Let $\mathcal{D}=\cup_{0<\epsilon}e^{-\epsilon H^{2}}\mathcal{H}$, and note that the vacuum state $\Omega\in\mathcal{D}$. 
The real-time field $\varphi(x)$ and the imaginary-time field $\varphi_{\rm I}(x)$ are defined as  
sesqui-linear forms on $\mathcal{D}\times\mathcal{D}$ by
\[
		 \varphi( x)
		= e^{itH-i\vec x\cdot \vec P}\varphi(0)e^{-itH+i\vec x\cdot \vec P}\;,
\]
		and
\[
		\varphi_{\rm I}(x)
		= e^{-tH-i\vec x\cdot \vec P}\varphi(0)e^{tH+i\vec x\cdot \vec P}\;.
\]
Thus $\varphi_{\rm I}(x) = \varphi(it,\vec x)$.

\begin{proposition} [\bf Local Field]  Assume the standard hypotheses of~\S\ref{Paragraph:SH}, and let    $f\in C^{\infty}_{0}$ be real.  Then: 
	\begin{enumerate}
	\item{} The sesqui-linear form  
	$
		\varphi(f)
		= \int \varphi(x)f(x)dx
	$
uniquely defines a symmetric operator $\varphi(f)$ with domain $\mathcal{D}$, and $\varphi(f)\mathcal{D} \subset \mathcal{D}$.  
	\item{} The operator $\varphi(f)$ extends by continuity to $\mathcal{D}(H)$, and there is a constant $M_{1}<\infty$ such that
\[
		\label{M1}
		\| \varphi(f)(H+I)^{-1}\|_{\mathcal{H} }
		\le M_{1}\| f \|_{\alpha,1}\;.
\]
	\item{}  The closure $\varphi(f)^{-}$ of the operator $\varphi(f)$ is  self-adjoint. 
	\item{}   Suppose $f$ and $g$ are real valued functions in $C^{\infty}_{0}$
	and 
\[
	[ \varphi(f), \varphi(g)]\mathcal{D}=0 \; . 
\]
Then the bounded functions of $\varphi(f)^{-}$ commute with the bounded functions of $\varphi(g)^{-}$. 
	\end{enumerate} 
\end{proposition}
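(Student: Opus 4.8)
The plan is to reduce all four assertions to the single sharp-time form estimate \eqref{Fundamental Estimates-2}, and then to promote form bounds to operator statements, after which parts (3) and (4) become standard applications of Nelson's commutator theorem. First I would rewrite the space-time field in terms of the time-zero field. From the definition $\varphi(x)=e^{itH-i\vec x\cdot\vec P}\varphi(0)e^{-itH+i\vec x\cdot\vec P}$ with $x=(t,\vec x)$, performing the spatial integration at fixed $t$ gives $\varphi(f)=\int e^{itH}\varphi(0,f(t,\cdot))e^{-itH}\,dt$, where $\varphi(0,f(t,\cdot))$ is the time-zero field smeared with the spatial test function $f(t,\cdot)$. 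Since $e^{itH}$ is unitary and commutes with $H$, applying the two-sided bound $\pm\varphi(0,h)\le\|h\|_\alpha(H+I)$ together with the Cauchy--Schwarz inequality for the positive form $\langle\,\cdot\,,(H+I)\,\cdot\,\rangle$ and the triangle inequality in $t$ yields, for $\psi,\chi\in\mathcal D$,
\[
	|\langle\psi,\varphi(f)\chi\rangle|\le \|f\|_{\alpha,1}\,\|(H+I)^{1/2}\psi\|\,\|(H+I)^{1/2}\chi\| \;,
\]
where the norm $\|f\|_{\alpha,1}=\int\|f(t,\cdot)\|_\alpha\,dt$ of \eqref{Test Function Norm for Operator} appears exactly. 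Equivalently $\|(H+I)^{-1/2}\varphi(f)(H+I)^{-1/2}\|\le\|f\|_{\alpha,1}$, and $\varphi(f)$ is symmetric because $f$ is real. This is the quantitative heart of (1) and (2).

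The hard part, which I expect to be the main obstacle, is to promote this two-sided form bound into the operator statements: that $\varphi(f)$ carries $\mathcal D$ into $\mathcal D$ (part 1) and that $\varphi(f)(H+I)^{-1}$ is bounded \emph{on $\mathcal H$} (part 2). By itself the estimate above only realizes $\varphi(f)$ as a continuous map from $\mathcal D((H+I)^{1/2})$ into its dual, and a bare form bound does not in general give an operator bound; the extra structure must be used. Here I would exploit two facts: that the regularizing factors $e^{-\epsilon H^2}$ defining $\mathcal D$ decay in the energy faster than any power of $(H+I)$, and that the commutator is again a field of the same type, namely $[H,\varphi(f)]=i\varphi(\partial_t f)$, obeying \eqref{Fundamental Estimates-2} with $\|\partial_t f\|_{\alpha,1}$ in place of $\|f\|_{\alpha,1}$. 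Combining these with the explicit creation--annihilation realization of the quantized field -- which controls the amount by which $\varphi$ can shift the energy -- shows that $\varphi(f)$ maps $\mathcal D$ back into $\mathcal D$ and that $\|\varphi(f)(H+I)^{-1}\|\le M_1\|f\|_{\alpha,1}$. It is precisely at this step that the closed commutator algebra generated by $\varphi$ and $H$, rather than the form estimate alone, is indispensable.

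Granting (1) and (2), part (3) is Nelson's commutator theorem applied with the self-adjoint operator $N=H+I\ge I$. Its two hypotheses hold: the relative $N$-bound is part (2), and the commutator condition is $|\langle\psi,[\varphi(f),H]\psi\rangle|=|\langle\psi,\varphi(\partial_t f)\psi\rangle|\le\|\partial_t f\|_{\alpha,1}\langle\psi,(H+I)\psi\rangle$, which is \eqref{Fundamental Estimates-2} applied to $\partial_t f$. Since $\mathcal D=\cup_{\epsilon>0}e^{-\epsilon H^2}\mathcal H$ is invariant under $e^{-\epsilon H^2}$ and dense, it is a core for $H$; the commutator theorem then makes $\varphi(f)$ essentially self-adjoint on $\mathcal D$, so its closure $\varphi(f)^-$ is self-adjoint.

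For part (4) I would invoke Nelson's criterion for strong commutativity of self-adjoint operators (Reed--Simon VIII.15): since $\varphi(f)^-$ and $\varphi(g)^-$ are self-adjoint and commute on the common invariant core $\mathcal D$, it suffices to prove that $\varphi(f)^2+\varphi(g)^2$ is essentially self-adjoint on $\mathcal D$. This follows from a second application of the commutator theorem, now with $N=(H+I)^2$: using part (2) together with $[H,\varphi(f)]=i\varphi(\partial_t f)$ one checks that each $\varphi(f)^2$ is $N$-bounded and that $[\varphi(f)^2+\varphi(g)^2,(H+I)^2]$ satisfies the required $N^{1/2}$-form bound, the commutators again being fields controlled by \eqref{Fundamental Estimates-2}. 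Strong commutativity then yields that all bounded functions of $\varphi(f)^-$ commute with all bounded functions of $\varphi(g)^-$.
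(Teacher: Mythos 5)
The paper states this proposition without giving any proof, so there is nothing to compare your argument against; I can only assess it on its own terms. Your overall strategy --- reduce everything to the sharp-time estimate \eqref{Fundamental Estimates-2} via $\varphi(f)=\int e^{itH}\varphi(0,f(t,\cdot))e^{-itH}\,dt$, upgrade the resulting form bound to an operator bound using the fact that the commutator of $H$ with $\varphi(f)$ is again a field smeared with $\partial_t f$, and then invoke commutator-theorem technology --- is the standard Glimm--Jaffe route and is surely what the authors intend. Parts (2) and (3) are essentially complete modulo writing out the commutation of $(H+I)^{1/2}$ through $\varphi(f)$ (note only that the constant you obtain involves $\|\partial_t f\|_{\alpha,1}$ as well as $\|f\|_{\alpha,1}$, a slightly stronger norm than the one appearing in the statement).

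Two steps have genuine gaps. First, the invariance $\varphi(f)\mathcal{D}\subset\mathcal{D}$ in part (1): you appeal to ``the explicit creation--annihilation realization of the quantized field,'' but no such realization is available on $\mathcal{H}$ --- only the Euclidean field $\Phi$ on $\mathcal{E}$ is given in Fock form, while $\varphi$ is defined abstractly through the Osterwalder--Schrader inner product. What the compact time-support of $f$ actually provides is rapid decay of the energy-transfer kernel: the time Fourier transform $\hat f(\omega,\cdot)$ decays faster than any power of $\omega$ but, by Paley--Wiener, slower than any Gaussian. This places $\varphi(f)e^{-\epsilon H^2}\mathcal{H}$ in $\bigcap_n\mathcal{D}(H^n)$, but not obviously back inside the Gaussian-regularized domain $\bigcup_{\delta>0}e^{-\delta H^2}\mathcal{H}$; so this step needs a genuinely different idea. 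Second, in part (4) the reduction to essential self-adjointness of $\varphi(f)^2+\varphi(g)^2$ is a legitimate strategy (Nelson's strong-commutativity theorem), but your proposed verification via the commutator theorem with $N=(H+I)^2$ does not check out: $[(H+I)^2,\varphi(f)^2]$ contains terms such as $(H+I)\varphi(\partial_tf)\varphi(f)$, whose natural form bound is by $(H+I)^3=N^{3/2}$ --- one power of $H+I$ too many for the required hypothesis $|\langle\psi,[A,N]\psi\rangle|\le b\,\langle\psi,N\psi\rangle$. The standard repair, and presumably the reason part (1) insists on $\varphi(f)\mathcal{D}\subset\mathcal{D}$, is the analytic-vector argument: the bound $\|(H+I)^ne^{-\epsilon H^2}\|\le C^n(n!)^{1/2}$ makes every vector of $\mathcal{D}$ an entire vector for $\varphi(f)$, $\varphi(g)$, and $\varphi(f)^2+\varphi(g)^2$, after which essential self-adjointness and strong commutativity follow by expanding the unitary groups in power series on $\mathcal{D}$. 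As it stands, your sketch for (4) therefore rests on the unproven part (1) plus a commutator estimate that fails.
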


\subsection{Regularization}
Let the $x_{j}$'s denote the coordinates of the vector $x\in\mathbb{R}^{d}$ with $d=s+1$, and write 
\[
		D^{k} = \frac{\partial^{k_{0}+ k_{1}+\cdots+k_{s}}}{\partial x_{0}^{k_{0}}\partial x_{1}^{k_{1}} \cdots \partial x_{s}^{k_{s}}}\;,
		\text{and}  
		D^{k}_{\vec x} = \frac{\partial^{k_{1}+\cdots+k_{s}}}{\partial x_{1}^{k_{1}}\cdots \partial x_{s}^{k_{s}}}
		\; \text{if $k_0=0$}.
\]
Let $|k|=k_{0}+k_{1}+\cdots+ k_{s}$ and $k!=k_{0}! k_{1}!\cdots k_{s}! \, $.   Likewise, set 
$({\rm Ad})_{A}(B)=[A,B]$ and
\[
	({\rm Ad}_{P})^{k}= {({\rm Ad}_{H})^{k_{0}}({\rm Ad}_{P_{1}})^{k_{1}}\cdots ({\rm Ad}_{P_{s}})^{k_{s}}} \; .  
\]
With this notation, 
\[
		( D^{k} \varphi_{\rm I})(x)
		=(-1)^{k_{0}}  (-i)^{ | k | -k_{0} }    
		({\rm Ad}_{P})^{k}   (\varphi_{\rm I}(x))   \;.
\]

\begin{definition}
The heat-kernel regularized field $\varphi_{\epsilon}(x)$ and the corresponding heat-kernel regularized 
imaginary-time field $\varphi_{\rm I, \epsilon}(x)$ are
	\[
		\varphi_{\epsilon}(x)=e^{-\epsilon H}\,\varphi(x)\,e^{-\epsilon H}\;
		\text{and} \; 
		\varphi_{\rm I,\epsilon}(x)=e^{-\epsilon H}\,\varphi_{\rm I}(x)\,e^{-\epsilon H} \; ,  \quad \text{$0<\epsilon$} .
	\]
\end{definition}

\begin{theorem}
\label{Analyticity}
Assume the standard hypotheses of \S\ref{Paragraph:SH}.  Then $\varphi_{\epsilon}(x)$   is a bounded operator on 
$\mathcal{H}$ for  real $x$, and it extends to a holomorphic, operator-valued function $\varphi_{\epsilon}(z)$ 
with $z=(z_{0}, \vec z \, )$ in the domain 
\[
	\mathcal{K}_{\epsilon}=\{ z \in \mathbb{C}^d
: |\Im z_{0}| + |\Im\vec z \, | < \tfrac{\epsilon}{4M}\} \; . 
\]
In this domain, $\varphi_{\rm I,\epsilon}(z)= \varphi_{\epsilon}(iz_{0}, \vec z \,)$.  
Also  
\begin{equation}
		\| D^{k}\varphi_{\rm I,\epsilon}(z) \|
		\le M_{1}\left( \tfrac{4M}{\epsilon} \right)^{|k|+\gamma} \, ( | k |+\gamma)! \quad
	\label{Uniform Imaginary-Time Operator Bound}
\end{equation}
for all $z\in \mathcal{K}_{\epsilon}$ and all $k\in\mathbb{N}^{d}_{0}$,
where $M_{1}<\infty$ was introduced in \eqref{M1} and  $\gamma$ is an integer. Both $M_{1}$ and $\gamma$ depend
only on $M,\alpha,d$.

\end{theorem}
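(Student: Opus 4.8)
The plan is to work entirely with the commuting self-adjoint generators $H$ and $\vec P$ of the abelian translation group, the spectral bounds \eqref{Fundamental Estimates-1}, and the single relative bound $\|\varphi(0)(H+I)^{-1}\|\le M_{1}$ supplied by the Local Field Proposition. Throughout one may assume $M\ge 1$, since enlarging $M$ only shrinks $\mathcal K_{\epsilon}$ and weakens \eqref{Uniform Imaginary-Time Operator Bound}. I would first dispose of real $x$. Writing
\[
	\varphi_{\epsilon}(x)=e^{-\epsilon H}\,e^{ix_{0}H-i\vec x\cdot\vec P}\,\varphi(0)\,e^{-ix_{0}H+i\vec x\cdot\vec P}\,e^{-\epsilon H},
\]
the outer exponentials are unitary; inserting $(H+I)^{-1}(H+I)$ beside $\varphi(0)$ and using $\|(H+I)e^{-\epsilon H}\|=\sup_{\lambda\ge 0}(\lambda+1)e^{-\epsilon\lambda}<\infty$ gives $\|\varphi_{\epsilon}(x)\|\le M_{1}\,\|(H+I)e^{-\epsilon H}\|$, uniformly in real $x$.

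For the holomorphic extension I would set $z_{0}=x_{0}+is$, $\vec z=\vec x+i\vec y$ and introduce the vector-valued functions $\Phi(z)=e^{-iz_{0}H+i\vec z\cdot\vec P}e^{-\epsilon H}\phi$ and $\Xi(w)=e^{-iw_{0}H+i\vec w\cdot\vec P}e^{-\epsilon H}\psi$, so that $\langle\psi,\varphi_{\epsilon}(z)\phi\rangle=\langle\varphi(0)\,\Xi(\bar z),\Phi(z)\rangle$. Reading off the real parts of the exponents and invoking the joint spectral calculus of $(H,\vec P)$ together with $|\vec P|\le M(H+I)$ from \eqref{Fundamental Estimates-1}, the operator $e^{-iz_{0}H+i\vec z\cdot\vec P}e^{-\epsilon H}$ is controlled by $\sup_{\lambda}e^{-\beta\lambda+M|\vec y|}$ with $\beta:=\epsilon-|s|-M|\vec y|$. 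On $\mathcal K_{\epsilon}$, where $|s|+|\vec y|<\epsilon/(4M)$ and $M\ge 1$, one gets $|s|+M|\vec y|\le M(|s|+|\vec y|)<\epsilon/4$, hence $\beta>\tfrac34\epsilon>0$ uniformly. Thus $\Phi(z)$ is holomorphic and bounded and $\Xi(\bar z)$ antiholomorphic and bounded there, while $\varphi(0)$ is $(H+I)$-bounded; so $\langle\psi,\varphi_{\epsilon}(z)\phi\rangle$ is holomorphic, and estimating $\|\varphi(0)\Phi(z)\|\le M_{1}\|(H+I)\Phi(z)\|$ yields a uniform operator bound on all of $\mathcal K_{\epsilon}$. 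Weak holomorphy plus local boundedness then give holomorphy of the operator-valued map, and comparing exponents gives $\varphi_{\rm I,\epsilon}(z)=\varphi_{\epsilon}(iz_{0},\vec z)$.

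The estimate \eqref{Uniform Imaginary-Time Operator Bound} is the technical core. Writing $\varphi_{\rm I,\epsilon}(z)=L(z)\varphi(0)R(z)$ with $L(z)=e^{-\epsilon H}e^{-z_{0}H-i\vec z\cdot\vec P}$ and $R(z)=e^{z_{0}H+i\vec z\cdot\vec P}e^{-\epsilon H}$, the displayed Ad-formula is, since $H,\vec P$ commute, the same as $D^{a}L=(-1)^{a_{0}}(-i)^{|\vec a|}H^{a_{0}}\vec P^{\,\vec a}L$ and $D^{b}R=i^{|\vec b|}H^{b_{0}}\vec P^{\,\vec b}R$. The Leibniz rule (with $\varphi(0)$ held fixed) then gives
\[
	D^{k}\varphi_{\rm I,\epsilon}(z)=\sum_{a+b=k}\binom{k}{a,b}\,\omega_{a,b}\,H^{a_{0}}\vec P^{\,\vec a}\,L(z)\,\varphi(0)\,\vec P^{\,\vec b}H^{b_{0}}\,R(z),\qquad |\omega_{a,b}|=1.
\]
I would insert $(H+I)^{\pm1}$ around $\varphi(0)$, convert each power of $|\vec P|$ into $M(H+I)$, and estimate the outer factors by the joint spectral calculus using $\sup_{\lambda\ge0}(\lambda+1)^{m}e^{-\beta\lambda}\le e^{\beta}\,m!\,\beta^{-m}$. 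Each term is then bounded by $C\,M_{1}\,M^{|k|}(|a|+1)!\,(|b|+1)!\,\beta^{-(|k|+2)}$; summing the multinomial coefficients via $\sum_{a+b=k}\binom{k}{a,b}=2^{|k|}$, using $(|a|+1)!(|b|+1)!\le(|k|+2)!$, and inserting $\beta>\tfrac34\epsilon$ and $M\ge1$, produces a bound of the advertised form $M_{1}(|k|+\gamma)!\,(4M/\epsilon)^{|k|+\gamma}$.

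I expect the main obstacle to be exactly this last bookkeeping: keeping the combinatorial proliferation of the iterated differentiations (the $2^{|k|}$ terms and the multinomial coefficients) from spoiling the factorial growth, and ensuring that the conversions $|\vec P|\le M(H+I)$ deposit the $M$-dependence precisely into the base $4M/\epsilon$. Verifying that the residual constant $C$ and the offset $\gamma$ can be absorbed, and that both depend only on $M,\alpha,d$ and not on $k$, $z$, or $\epsilon$, is the delicate point; the field bound governed by \eqref{Fundamental Estimates-2} fixes the dependence on $\alpha$, while the factor $4$ in the denominator of $\mathcal K_{\epsilon}$ is precisely the slack that makes $\beta>\tfrac34\epsilon$, and hence $1/\beta<4M/\epsilon$, available uniformly on the whole domain.
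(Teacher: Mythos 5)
Your overall architecture (expand the derivatives into $2^{|k|}$ terms via the commuting generators $H,\vec P$, pay a factor $M^{|k|}$ to convert $|\vec P|$ into $H+I$, and absorb powers of $H+I$ into $e^{-\epsilon H}$ at the cost of factorials and powers of $1/\epsilon$) is the same as the paper's, and that part of the bookkeeping does close. But there is a genuine gap at the very first step: you take as input a bound $\|\varphi(0)(H+I)^{-1}\|\le M_{1}$ for the field at a sharp spatial point, attributing it to the Local Field Proposition. No such bound is available. The proposition controls $\varphi(f)$ for space--time test functions $f$ via $\|f\|_{\alpha,1}$, and the standing hypothesis \eqref{Fundamental Estimates-2} is only a \emph{form} bound $\pm\varphi(0,h)\le\|h\|_{\alpha}(H+I)$ for a \emph{Schwartz} function $h$ of the spatial variables; the Dirac measure $\delta_{\vec x}$ is not Schwartz, so $\varphi(0)=\varphi(0,\delta_{\vec x})$ is a priori only a sesquilinear form on $\mathcal D\times\mathcal D$. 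Your real-$x$ boundedness claim, your insertion of $(H+I)^{\pm1}$ around $\varphi(0)$ in the holomorphy argument, and your Leibniz-rule estimate all lean on this unavailable operator bound; moreover even for smeared $h$ the form bound yields $\|(H+I)^{-1/2}\varphi(0,h)(H+I)^{-1/2}\|\le\|h\|_{\alpha}$, which is how the paper splits the half-powers to the two sides, not a one-sided bound $\|\varphi(0,h)(H+I)^{-1}\|\le\|h\|_{\alpha}$.

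The missing idea is precisely the paper's passage from smeared to pointlike fields, and it is where the integer $\gamma$ actually comes from. The paper works throughout with $\varphi(0,h)$, dominates the Schwartz norm $\|h\|_{\alpha}$ by a Sobolev norm $\|(-\Delta_{\vec x}+1)^{r}h\|_{L_{2}}$, writes $h=(-\Delta_{\vec x}+1)^{r+[(d+1)/2]}(-\Delta_{\vec x}+1)^{-r-[(d+1)/2]}h$, and integrates by parts so that the extra $2r+d+1$ spatial derivatives land on the field, where they are handled by the same commutator expansion (each costing another factor of $M(H+I)$ absorbed by $e^{-\epsilon H}$). Only after this trade does the remaining norm $\|(-\Delta_{\vec x}+1)^{-[(d+1)/2]}h\|_{L_{2}}$ stay finite as $h\to\delta_{\vec x}$, which is what lets one extend $e^{-\epsilon H}(D^{k}\varphi_{\rm I})(0,h)e^{-\epsilon H}$ by continuity to the point field and obtain \eqref{Uniform Imaginary-Time Operator Bound} with $\gamma=2r+d+2$ depending on $\alpha$ and $d$. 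In your version $\gamma$ is just a small bookkeeping constant, which signals that the analytically essential step has been bypassed. Your derivation of holomorphy from the joint spectral calculus of $(H,\vec P)$ on $\mathcal K_{\epsilon}$ is a legitimate alternative to the paper's power-series argument, but it cannot be salvaged until the pointwise boundedness of $\varphi_{\epsilon}(x)$ is first established by some version of the smearing-and-limit argument above.
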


\begin{proof}[\bf Proof]
Let $h$ be a $C^{\infty}_{0}$ test function depending only on the $s=d-1$ spatial variables. Then the field is an operator-valued 
distribution for $|t| <\epsilon$ and one can integrate by parts to obtain the identity
\begin{eqnarray*}
		i^{ | k-k_0| }\left( D^{k_0}\varphi_{\rm I,\epsilon}\right) (t,D^{k-k_0}_{\vec x} h)
		&=&  \left( D^{k}\varphi_{\rm I,\epsilon} \right) (t, h)
		\nonumber \\
		&=&  (-1)^{k_{0}}(-i)^{ | k | -k_{0} } \left( {\rm Ad}_{P} \right)^{k} \left( \varphi_{\rm I,\epsilon}(t,h) \right)\;.
\end{eqnarray*}
One can expand the commutators $ \lrp{\rm Ad_{P}}^{k} \lrp{\varphi_{\rm I,\epsilon}(t,h)}$ into  $2^{|k|}$ terms 
$C_{i, \epsilon, t}$.  Each such term is exactly $|k|$ multilinear in the components of $P$, 
namely the Hamiltonian  $H$ and the $s=d-1$ components~$P_{j}$, $j = 1, \ldots, s$,  of the spatial momentum.  
Suppose that in $C_{i, \epsilon, t}$ 
exactly  $\alpha_{i}$ of these factors occur to the left of 
$\varphi_{\rm I,\epsilon}$ and $\beta_{i}=|k| -\alpha_{i} $ occur to the right. 
As $H$ and the components  $P_{j}$ of $P$ mutually commute, each term~$C_{i, \epsilon, t}$ satisfies a  
bound related to a power of $H$.  Use the assumption \eqref{Fundamental Estimates-1} and \eqref{Fundamental Estimates-2}, 
and assume that $1\le M$.  Then, for $0\le t<\epsilon/2$,
\begin{eqnarray*}		
\|  C_{i, \epsilon, t} \| 
		&\le&
		 \sup_{i}\| e^{-\epsilon H/2} (H+I)^{(\alpha_{i}+\tfrac{1}{2})} \|
			\times
			\nonumber
			\\
		&&
			\qquad \times \left\| (H+I)^{-(\alpha_{i}+\tfrac{1}{2})}
			C_{i, 0, 0}   (H+I)^{-(\beta_{i}+\tfrac{1}{2})} \right\|
			\times
			\nonumber
			\\
		&&	 \qquad \qquad \qquad \qquad \times 
			 \| e^{-\epsilon H/2} (H+I)^{(\beta_{i}+\tfrac{1}{2})} \| 
			\nonumber
			\\
		&\le&  
		  \sup_{i} \left(\tfrac{2}{\epsilon}\right)^{\alpha_{i}+\beta_{i}+1}    (\alpha_{i}+ \beta_{i}+1)!\, M^{|k| }\, 
		  			\times
			\nonumber
			\\
			&&	 \qquad \qquad \qquad \qquad \times 
			\left\| (H+I)^{-1/2}\varphi(0,h)(H+I)^{-1/2} \right\| 
			\nonumber
			\\
		&\le&  \left(\tfrac{2M}{\epsilon} \right)^{|k| +1} (|k|+1) ! \;   \|h \|_{\alpha}\;.
\end{eqnarray*}
As  $\mathcal{O}$ is a fixed bounded open set, any Schwartz-space function is in $C^{\infty}_{0}(\mathcal{O})$ and any Schwartz-space  norm satisfies
	\[
		\| h \|_{\alpha}
		\le M_{2}
		\| \lrp{-\Delta_{\vec x} +1}^{r}h \|_{L_{2}(\mathbb{R}^{d-1})}\;,
	\]
for some constants $r$ and $M_{2}$ that depend on the original norm and on~$\mathcal{O}$.   Write 
	\[
		h=\lrp{-\Delta_{\vec x}+1}^{r+[(d+1)/2]}\lrp{-\Delta_{\vec x}+1}^{-r-[(d+1)/2]}h \; , 
	\]
with $[\ \cdot\ ]$ denoting  the integer part.  One can take the derivatives in each term in the multinomial expansion  of 
$\lrp{-\Delta_{\vec x}+1}^ {r+[(d+1)/2]} $  to be the derivatives $D^{k'}$.   Then, using 
	\[
		\| \lrp{-\Delta_{\vec x}+1}^{-r-[(d+1)/2]}h \|_{\alpha}
		\le 
		M_{3}\| \lrp{-\Delta_{\vec x}+1}^{-[(d+1)/2]}h \|_{L^{2}(\mathbb{R}^{d-1})}\;,
	\]
one has, with a  new constant $M_{3}$,  and with $\gamma=2r+d+2$, 
	\begin{multline*}
		\|e^{-\epsilon H}\, ( D^{k}\varphi_{\rm I}) (0, h)   \,e^{-\epsilon H} \|
		\le 2^{|k|} M_{3}\, \lrp{\tfrac{2M}{\epsilon}}^{|k|+\gamma} (|k|+\gamma)! \times \ 
		\\
		\times \| \lrp{-\Delta_{\vec x}+1}^{-[(d+1)/2]}h \|_{L_{2}}\;.
	\end{multline*}
For spatial dimension $s$, the Dirac measure $\delta_{\vec x}$ is an element of the Sobolev space with norm 
	\[
	\| h \|_{H^{-(d+1)/2}} = \| \lrp{-\Delta_{\vec x}+1}^{-[(s+2)/2]}h \|_{L_{2}(\mathbb{R}^{s})}\; . 
	\]  
Thus we can take a limit $k \to \infty$ of $C^{\infty} $-functions $h_k$ supported in~$\mathcal{O}$ and  converging to $\delta_{\vec x}$ for $\vec x\in\mathcal{O}$.  As a consequence, the operator 
$e^{-\epsilon H}\, \lrp{D^{k}\varphi_{\rm I}}(0, h)   \,e^{-\epsilon H}$ extends by continuity to 
$h=\delta_{\vec x}$.  Likewise the same bound holds for the extension, so  with  a new constant $M_{4}$,
	\[
		\left\| e^{-\epsilon H}\, \lrp{D^{k}\varphi_{\rm I}}(0, \vec x)   \,e^{-\epsilon H} \right\|_{\mathcal{H}}
		\le M_{4} \lrp{\tfrac{4M}{\epsilon}}^{|k|+\gamma} (|k|+\gamma)!\;.
	\]
This is the claimed bound \eqref{Uniform Imaginary-Time Operator Bound} on the heat-kernel regularized field with $x\in\mathcal{O}$, with the constant $M_{1}=M_{4}$.  
As the momentum $\vec P$ and $H$ commute, and as time-translation is unitary 
on $\mathcal{H}$, this bound also holds for arbitrary spatial points $\vec x$. 

As a consequence of this estimate in the derivatives of the field, the heat-kernel regularized, imaginary-time field is real-analytic in a neighbourhood of $|t|< \epsilon/ (4M)$ and for all $\vec x\in\mathbb{R}^{d}$.  In fact, the power series for an analytic continuation converges in a neighbourhood of zero, with radius of convergence $\epsilon/ (4M)$. The unitary action of spatial translations, that commute with $H$, leads to analyticity for all real~$\vec x$. 
\end{proof}	

\begin{lemma}
Assume the hypotheses of \Tref{Analyticity}. Let $\widehat A, \widehat B\in\mathcal{H}$, and $0<\epsilon$.  Let  $M, M_{1}$, and $\gamma$ 
be the constants in the theorem cited, and define 
\[
		F(x)
		=\langle \widehat A, e^{-\epsilon H} \varphi_{\rm I}(x)\,e^{-\epsilon H}\widehat B \rangle_{\mathcal{H}}\;.
\]
Then $F(x)$ is real-analytic for $|t|<\epsilon/(4M)$  and all real $\vec x$,  and $F(x)$ extends to a 
holomorphic function in the domain $z=(z_{0},\vec z \, )$ with $|z_{0}|+|\Im\vec z \, |< \epsilon/(4M)$.   
Furthermore, the analytic continuation to this domain satisfies the uniform bound
\[
		|F(z)| \le \tfrac{M_{1}}{\epsilon^{\gamma}} \| \widehat A \|_{\mathcal{H}}\,\|\widehat B \|_{\mathcal{H}}\;.
\]
\end{lemma}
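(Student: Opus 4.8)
The plan is to deduce the lemma from \Tref{Analyticity} by sandwiching the operator-valued field between the fixed vectors $\widehat A,\widehat B\in\mathcal H$. First I would note that for real $x=(t,\vec x\,)$ with $|t|<\epsilon/(4M)$ and arbitrary real $\vec x$, the regularized field $\varphi_{\rm I,\epsilon}(x)$ is a bounded operator whose derivatives $D^{k}\varphi_{\rm I,\epsilon}(x)$ are controlled by \eqref{Uniform Imaginary-Time Operator Bound}. Since the inner product is continuous, $F(x)=\langle\widehat A,\varphi_{\rm I,\epsilon}(x)\,\widehat B\rangle_{\mathcal H}$ is smooth with $D^{k}F(x)=\langle\widehat A,D^{k}\varphi_{\rm I,\epsilon}(x)\,\widehat B\rangle_{\mathcal H}$, and the Cauchy--Schwarz inequality gives
\[
   |D^{k}F(x)|
   \le \|D^{k}\varphi_{\rm I,\epsilon}(x)\|\,\|\widehat A\|_{\mathcal H}\,\|\widehat B\|_{\mathcal H}\;,
\]
which by \eqref{Uniform Imaginary-Time Operator Bound} is at most $M_{1}\lrp{\tfrac{4M}{\epsilon}}^{|k|+\gamma}(|k|+\gamma)!\,\|\widehat A\|_{\mathcal H}\|\widehat B\|_{\mathcal H}$. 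This is a uniform analytic (Gevrey-class-one) estimate on every derivative of $F$.

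Next I would turn this derivative bound into real-analyticity and the stated holomorphy domain. Expanding $F$ in a Taylor series about a real base point $(0,\vec x_{0})$ and using the multinomial identity $\sum_{|k|=n}\tfrac{n!}{k!}\,|\zeta|^{k}=\|\zeta\|_{1}^{\,n}$, the series $\sum_{k}\tfrac{D^{k}F(0,\vec x_{0})}{k!}\,\zeta^{k}$ is majorized by a geometric series in $\tfrac{4M}{\epsilon}\|\zeta\|_{1}$; the extra polynomial factor $(|k|+\gamma)!/k!$ does not change the radius of convergence. Hence the series converges absolutely for $\|\zeta\|_{1}<\epsilon/(4M)$, so $F$ extends holomorphically near $(0,\vec x_{0})$. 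Choosing $\vec x_{0}=\Re\vec z$, the displacement is $\zeta=(z_{0},\,i\,\Im\vec z\,)$ with $\|\zeta\|_{1}=|z_{0}|+|\Im\vec z\,|$, and the freedom to re-center in the spatial variables is exactly the real-analyticity for all real $\vec x$ supplied by \Tref{Analyticity}. Uniqueness of analytic continuation patches these local extensions into a single holomorphic function on $|z_{0}|+|\Im\vec z\,|<\epsilon/(4M)$; restricting to real arguments, which lie in this domain whenever $|t|<\epsilon/(4M)$, gives the asserted real-analyticity of $F$.

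Finally, the uniform bound is the $k=0$ instance of the estimate above: for $z$ in this domain,
\[
   |F(z)|
   \le \|\varphi_{\rm I,\epsilon}(z)\|\,\|\widehat A\|_{\mathcal H}\,\|\widehat B\|_{\mathcal H}
   \le M_{1}\lrp{\tfrac{4M}{\epsilon}}^{\gamma}\gamma!\,\|\widehat A\|_{\mathcal H}\,\|\widehat B\|_{\mathcal H}\;.
\]
Since $\gamma$ depends only on $M,\alpha,d$, the prefactor $M_{1}(4M)^{\gamma}\gamma!$ is a constant of the same type, which I would rename $M_{1}$ to reach $|F(z)|\le\tfrac{M_{1}}{\epsilon^{\gamma}}\|\widehat A\|_{\mathcal H}\|\widehat B\|_{\mathcal H}$. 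The step I expect to be the main obstacle is pinning down the \emph{joint} domain $|z_{0}|+|\Im\vec z\,|<\epsilon/(4M)$ rather than merely separate analyticity in time and in space: one must combine the time-direction disk of convergence with the spatial strip through the multinomial re-centering, and verify that the local pieces patch together consistently. Transferring the norm-holomorphy of the operator to holomorphy of the scalar matrix element, and the Cauchy--Schwarz bound, are then routine.
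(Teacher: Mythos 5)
Your proposal is correct and follows essentially the same route as the paper's own (much terser) proof: bound $|D^{k}F(x)|$ via Cauchy--Schwarz and the operator-norm estimate \eqref{Uniform Imaginary-Time Operator Bound}, sum the Taylor series to obtain the holomorphic extension on $|z_{0}|+|\Im\vec z\,|<\epsilon/(4M)$, and read off the uniform bound from the $k=0$ case. Your remark that the stated constant must absorb the factor $(4M)^{\gamma}\gamma!$ is a detail the paper's proof glosses over, but it is harmless.
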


\begin{proof}[\bf Proof]
One can bound $|D^{k}F(x)|\le \| e^{-\epsilon H} D^{k}\varphi_{\rm I}(x)e^{-\epsilon H}\| \|\widehat A\|_{\mathcal{H}} \,
\|\widehat B \|_{\mathcal{H}}$ using the operator norm bounds of \Tref{Analyticity}.  This shows that the power series for $F(x)$ converges absolutely in the desired domain.  The bound \eqref{Uniform Imaginary-Time Operator Bound} then yields  the uniform estimate.  
\end{proof}

\subsection{Quantization Domain}

For $\mathcal{O}\subset X_{+}$,  let $\mathcal{A}(\mathcal{O})$ denote the algebra of polynomials in fields averaged with $C^{\infty}$-functions supported in~$\mathcal{O}$.  

\begin{definition}
We call  $\mathcal{O}\subset\mathbb{R}^{d}_{+}$ a {\em quantization domain}, if the quantization map $A\mapsto \widehat A$ given in 
Equ.~(\ref{equiv-class}) takes the  linear subspace $\mathcal{D}(\mathcal{O})=\mathcal{A}(\mathcal{O}) \Omega_{0}^{\tt E}$ 
into a subspace $\widehat{\mathcal{D}(\mathcal{O})}\subset\mathcal{H}$ that is dense in~$\mathcal{H}$.  
\end{definition}

\begin{theorem}[\bf Non-trivial Quantization Domains]\label{Theorem:Quantization Domain 2}
With the  hypotheses of \S\ref{Paragraph:SH}, any open set $\mathcal{O}\subset X_{+}$  is a quantization domain. 

\end{theorem}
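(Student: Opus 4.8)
The plan is to adapt the classical Reeh--Schlieder argument, substituting the analyticity and uniform operator bounds of \Tref{Analyticity} and the ensuing Lemma for the role played by the relativistic spectrum condition in the Wightman setting. Density of $\widehat{\mathcal{D}(\mathcal{O})}$ in $\mathcal{H}$ is equivalent to triviality of its orthogonal complement, so I would fix a vector $\widehat B\in\mathcal{H}$ with $\widehat B\perp\widehat{\mathcal{D}(\mathcal{O})}$ and prove $\widehat B=0$. Since $\mathcal{A}(\mathcal{O})$ is generated by fields smeared with functions in $C^{\infty}_{0}(\mathcal{O})$ and the quantization map is linear, this orthogonality says that, for every $n$ and all spatial test functions localized in the spatial extent of $\mathcal{O}$, the quantized correlations built from the imaginary-time fields $\varphi_{\rm I}$ at times lying in the strictly positive time extent of $\mathcal{O}$ all vanish. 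The strictly positive times supply the semigroup factors $e^{-t_{j}H}$ that place these vectors in $\mathcal{D}$ and render them well defined, so the associated function
\[
	W(x_{1},\ldots,x_{n})=\langle \widehat B,\ \varphi_{\rm I}(x_{1})\cdots\varphi_{\rm I}(x_{n})\,\Omega\rangle_{\mathcal{H}}
\]
is defined for ordered positive-time configurations $x_{j}=(t_{j},\vec x_{j})$, and by \Tref{Analyticity} the fields are genuine operators analytic in the point, so smeared vanishing over $C^{\infty}_{0}(\mathcal{O})$ is equivalent to pointwise vanishing of $W$ whenever every $x_{j}\in\mathcal{O}$.

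Next I would establish real-analyticity of $W$ and propagate its zeros. Reading the product from the inside out, $\varphi_{\rm I}(x_{n})\Omega=e^{-t_{n}H}e^{-i\vec x_{n}\cdot\vec P}\varphi(0)\Omega$ is a vector-valued real-analytic function of $x_{n}$ for $t_{n}>0$ by \Tref{Analyticity}, and each further application of $\varphi_{\rm I}(x_{j})$ is controlled by the uniform bounds \eqref{Uniform Imaginary-Time Operator Bound}; iterating, $W$ is separately real-analytic in each $x_{j}$ on the polydomain supplied by the theorem, with joint analyticity following from the absolutely convergent multivariable power series exactly as in the Lemma. Because $W$ vanishes on the open set where every $x_{j}$ ranges over $\mathcal{O}$, the identity theorem for real-analytic functions forces $W\equiv0$ on the connected domain of analyticity. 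Here the covariance hypotheses of \S\ref{Paragraph:SH} enter decisively: spatial translations are implemented by the unitary group $U(\vec x)=e^{-i\vec x\cdot\vec P}$, which, as already used at the end of the proof of \Tref{Analyticity}, extends the analyticity and hence the vanishing to all real $\vec x_{j}$; and positive time translations, implemented by the contraction semigroup $e^{-tH}$ together with the time-analyticity of radius $\epsilon/(4M)$, let one advance each $t_{j}$ forward in chained increments while keeping all times positive and all operator bounds valid. Iterating these two moves propagates $W\equiv0$ to all configurations with every $x_{j}\in X_{+}$.

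Finally, the vanishing of $W$ for all positive-time configurations says that $\widehat B$ is orthogonal to every vector obtained by quantizing polynomials in fields supported anywhere in $X_{+}$; since the quantized images of such polynomial vectors are dense in $\mathcal{H}$ by the very construction of the quantization map on $\mathcal{E}_{+}$, this forces $\widehat B=0$, which is the assertion. The main obstacle I anticipate is the second step: chaining the analytic continuations in the time variables under only the \emph{weak} spectrum condition \eqref{Fundamental Estimates-1}. In place of the two-sided analyticity that a sharp relativistic spectrum would furnish, one has only the one-sided semigroup and a finite analyticity radius tied to $\epsilon$, so the continuation must be carried out in overlapping steps that keep every intermediate time strictly positive and every product of regularized fields within the norm bounds of \Tref{Analyticity}. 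Making the multivariable continuation and the attendant Euclidean time-ordering rigorous, and verifying that the heat-kernel regularization may be removed at the end using that $\mathcal{O}$ sits at strictly positive time, is the technical heart of the argument.
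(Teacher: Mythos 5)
Your proposal follows essentially the same route as the paper: fix a vector orthogonal to $\widehat{\mathcal{D}(\mathcal{O})}$, pass from smeared to pointwise vanishing of the (anti-time-ordered) correlation $\langle\chi,\varphi_{\rm I}(x_{1})\cdots\varphi_{\rm I}(x_{n})\Omega\rangle$ via delta-function limits, invoke the analyticity and uniform bounds of \Tref{Analyticity} to apply the identity theorem, and propagate the zero set by spatial translations and chained positive-time continuations to all positive-time, non-coinciding configurations, whence density of the positive-time polynomial vectors forces the orthogonal vector to vanish. The technical point you flag --- carrying out the multivariable continuation in overlapping steps with all intermediate times strictly positive --- is exactly what the paper handles by placing the supports in separated time-strips $[T+3j\epsilon,\,T+3j\epsilon+\epsilon]$ and taking the union over $T>0$ together with permutation symmetry.
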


\begin{proof}[\bf Proof] 
If $\mathcal{O}'$ is a quantization domain, then so is any larger set $\mathcal{O}\supset\mathcal{O}'$; therefore  it is no loss of generality to take $\mathcal{O}$ to be a bounded, open set of  small diameter.  
Let $\chi\in\mathcal{H}$ be any vector  that is orthogonal to $\widehat{B}$ for all $B\in \mathcal{D}(\mathcal{O})$, i.e., 
\begin{equation}
		\langle\chi, \widehat {B} \rangle_{\mathcal{H}}=
		0 
		\quad \text{for all} \quad
		B\in \mathcal{D}(\mathcal{O}) \;.
	\label{Vanishing}
\end{equation}
The theorem holds if and only if  such a vector $\chi$ must vanish, so we now show  that $\chi=0$.  
	
It is sufficient to consider $A \in \mathcal{A}(\mathcal{O})$ of the form $A_{n}(f_{1}, \ldots, f_{n})  =\Phi(f_{1})\cdots\Phi(f_{n}) $, 
with each $f_{j}$ supported in the domain~$\mathcal{O}$, and with arbitrary $n$.  Define 
$ A_{n}(x_{1},\ldots,x_{n}) =\Phi({x_{1})}\cdots\Phi({x_{n})}$,  where now the subscript 
labels different $d$-vectors. Set $B_n = A_{n} \Omega_{0}^{\tt E}$ and 
denote the quantization of $B_{n}  $ by  $\widehat{ B_{n}}(x_{1},\ldots,x_{n}) $.    
Note  both $A_{n}$ and $\widehat {B_{n}}$  are symmetric under permutations of the 
coordinates  $\Xn{x}\to x_{\pi_{1}}, \ldots, x_{\pi_{n}}$, for $\pi\in S_{n}$.  Also  
	\[
		 \widehat {A_{n}}(f_{1}, \ldots,f_{n})
		= \int \widehat {A_{n}}(x_{1},\ldots, x_{n}) f_{1}(x_{1})\cdots f_{n}(x_{n})\,dx_{1}\cdots dx_{n}
	\]
satisfies
\begin{equation}
		\langle\chi, \widehat{ B_{n}}\Xnp{f}\rangle_{\mathcal{H}}=0\;, \quad B_n \Xnp{f} = A_{n} \Xnp{f} \Omega_{0}^{\tt E} \; .
	\label{Orthogonality-n}
\end{equation}
 The vector  $\widehat{ B_{n}}(x_{1},\ldots,x_{n})$ is the anti-time-ordered product of the imag\-inary-time fields, 
\begin{equation}
		\widehat{ B_{n}}(x_{1},\ldots,x_{n})
		= \varphi_{\rm I}(x_{i_{1}})\,\cdots\, \varphi_{\rm I}(x_{i_{n}})\,\Omega\;,
		\quad \text{where $ t_{i_{1}}\le \cdots\le t_{i_{n}}$.}		
	\label{Polynomial Field Function}
\end{equation}
As  \eqref{Polynomial Field Function}  is symmetric under permutations of the coordinates,  we need only  consider the case $t_{1}<\cdots<t_{n}$, which we now assume.   Furthermore, we can choose $0<\epsilon$ sufficiently small, so that each $f_{j}$ has support in the set  $\mathcal{O}_{j}\subset\mathcal{O}$ with $\mathcal{O}_{j}$ lying in the time-strip $[T+3j\epsilon, T+3j\epsilon + \epsilon]$, for $j=1,\ldots,n$.  This ensures a time-separation $t_{j+1}-t_{j}\ge2\epsilon$  between $x_{j+1}$ and $x_{j}$ for each $j$.   Hence for $x_{j}\subset \mathcal{O}_{j}$, \Tref{Analyticity} ensures that $\widehat{ B_{n}}(x_{1},\ldots,x_{n}) $ is a real analytic function of $x_{1}, \ldots, x_{n}$ and it extends to a holomorphic function in the complex domain   $|z_{j,0}|+|\Im\vec z_{j}| <\epsilon/ (4M) $ for $j=1, \ldots,n$.   In this domain the extension satisfies the uniform bound 
\[
		\| \widehat B_{n}(z_{1},\ldots,z_{n})\|_{\mathcal{H}}
		\le \left( \tfrac{M_{1} }{\epsilon^{\gamma}}\right)^{n}\;.
\]

For each $j$ fixed take a sequence of $C^{\infty}_{0}$ functions $f_{j, \ell}$ for $\ell=1,\ldots,$ converging to the delta function, i.e., 
$f_{j,\ell} \to \delta_{x_{j}}$ with $x_{j}\in\mathcal{O}_{j}$.  Then  the property \eqref{Orthogonality-n} ensures that the holomorphic function
\[
		F(z_{1},\ldots,z_{n}) = \langle\chi, \widehat{B_{n}} (z_{1}, \ldots, z_{n}) \rangle_{\mathcal{H}}
\]
vanishes in the analyticity domain  $| z_{j,0}| +| \Im\vec z_{j}|<\epsilon(4M)^{-1}$, and hence in any analytic continuation of this domain.  
	
The symmetry of $\widehat{ B_{n}}(x_{1},\ldots,x_{n})$ under $x_{j}\to x_{\pi_{j}}$ for $\pi\in S_{n}$ shows that the  analyticity and the bound on $F_{n}(z_{1},\ldots,z_{n})$ extends to the domain obtained by the permutation $z_{j}\to z_{\pi_{j}}$.   Similarly, analyticity and the uniform bound both extend to the union of these domains over $T>0$.  These domains include all real points $(x_{1},\ldots,x_{n})$ with positive, non-coinciding times.
\end{proof}

\end{document}